\documentclass[preprint]{ptephy_v2}

\usepackage{hyperref}

\newtheorem{theorem}{Theorem}

\begin{document}

\title{Description of KPZ interface growth by stochastic Loewner evolution}


\author{Yusuke Kosaka Shibasaki}
\affil{Institute of Natural Sciences, College of Humanities and Sciences, Nihon University, Setagaya, Tokyo, 156-8550, Japan}
\affil{College of Art, Nihon University, Nerima, Tokyo, 176-8525, Japan \email{shibasaki.yusuke@nihon-u.ac.jp}}





\begin{abstract}%
In this study, we investigate the relationship between the one-dimensional (1D) Kardar-Parisi-Zhang (KPZ) equation and the stochastic Loewner equation (SLE), which is a one parameter family of the conformal mappings involving stochasticity. The author shows the correspondence between 1D KPZ equation with height function $h(x,t)=(3t^2x+x^3)/6t$ and Loewner equation driven by a nonlinear stochastic process, wherein the 1D dynamics of interface growth is characterized by Loewner entropy $S_{Loew}\simeq-\ln{t/\kappa}$. These results were numerically verified with discussions in relation to the universality in non-equilibrium statistical physics. 
\end{abstract}

\begin{description}
\item [\textbf{IMPORTANT NOTE}: ] This preprint is \textbf{NOT PEER-REVIEWED} article. However, I would like to share the present result for the purpose of the communications and discussions among the Experts, Educators, and Students (4/4/2026). \\
Yusuke K. Shibasaki, Nihon University.   
\end{description}

\maketitle

For the development of theoretical approaches to describe the non-equilibrium phenomena in the real-world systems, some nonlinear equations have been investigated. In 1986, Kardar, Parisi, and Zhang (KPZ) proposed an equation describing the growing interface in the pattern formation of active zone of the matters, particularly for examining underlying physical mechanisms using the analytical approach [1]. Since its suggestion, the KPZ equation has been well-investigated as one of the fundamental equations in the non-equilibrium statistical physics. The recent theoretical and experimental developments coined the term “KPZ universality class”, and various non-equilibrium phenomena are found to belong to this class [2-10]. The KPZ equation is described as a nonlinear partial differential equation with stochastic noise term, in which the analytical treatment relies on the stochastic differential equation (SDE); however, the derivation of the exact solution of this equation is known to be a difficult issue although some mathematical studies have suggested effective approaches to this problem [11-18]. 
In 2000s, on the contrary,  the model using conformal dynamics of the interface growth has been discussed by Feigenbaum, Procaccia, and Davidovich et al [19,20]. in relation to Saffman-Taylor instability [21]. Despite this conformal approach as well intends to describe the growing interface of active matters, e.g., viscous fluid, or the aggregation of the colloids, the connection between the KPZ and conformal dynamical approaches is still lacking. One of the representative theoretical tools that possesses both characteristics of conformal dynamics and SDE is stochastic Loewner evolution (SLE), which is suggested by Schramm in 2000 [22]. The standard SLE is comprised of Loewner differential equation (LDE) [23,24] with Wiener process, originally studied for the purpose of describing the 2-dimensional (2D) conformally invariant random curves. It is worth noting that some studies suggested its relation to growth process in the non-equilibrium systems, e.g., Laplacian growth [25] and diffusion-limited aggregation (DLA) [26,27], supporting the evidence that the theory of SLE helps to understand the interface growth problems in a more general context. 
For the above motivation, in this study, the author shall propose a possible description of the KPZ equation derived from Loewner equation. As is mentioned above, the Loewner differential equation expresses the 2D curve growth in the upper half of the complex plane. It takes a form of the time-differential equation the conformal map, which is driven by the real-valued one-dimensional driving process called Loewner driving function. By choosing the specific stochastic process as the Loewner driving function, we demonstrate the connection between KPZ and Loewner equation. This article is organized as the following parts. In Secs. 2.1 and 2.2, we introduce the original form of KPZ equation and the modified SLE used in this study, respectively. In Sec. 3.1, the derivation of the 1D KPZ equation from modified SLE is performed. In Sec. 3.2, the Loewner entropy corresponding to the dynamics of KPZ equation is analytically calculated. In Sec. 4, the numerical simulation to verify the present results is performed. In Sec. 5 and 6, the discussions and conclusion of this study are remarked. 

\section{Model}
\subsection{One-dimensional KPZ equation}
As a model of the interface growth, we consider the KPZ equation with one-dimensional (1D) spatial coordinate described as follows [1, 9]:
\begin{equation}
\frac{\partial h\left(x,t\right)}{\partial t}=\upsilon\frac{\partial^2h\left(x,t\right)}{\partial x^2}+\frac{\lambda}{2}\left(\frac{\partial h\left(x,t\right)}{\partial x}\right)^2+\sqrt\kappa\eta(t)
\end{equation} 
Here, $h\left(x,t\right)$ denotes the height of the interface at the position $x$ and time $t$. The first term in the right-hand side (r.h.s.) is the relaxation induced by the surface tension $\upsilon$. The second term in the r.h.s. is the nonlinear term appearing in the interface growth dynamics, whose strength is parameterized by $\lambda$. The term $\eta(t)$ is the white Gaussian noise with variance 1 and mean 0. The strength of noise term $\eta(t)$ is expressed by the constant diffusivity parameter $\kappa$. Because the KPZ equation in Eq. (1) takes a form of nonlinear Langevin equation, the derivation of its analytical solution is technically difficult; however, the scaling derived from Eq. (1) is found in various non-equilibrium systems, and it is called KPZ universality class. Let us consider the width of the surface defined as the following: [1,2,4]
\begin{equation}
W(L,t):=\sqrt{\langle h\left(L,t\right)^2 \rangle - {\langle h\left(L,t\right) \rangle}^2}. 
\end{equation}
Here, the brackets denote the ensemble averages and $L$ denotes the system size. For the KPZ equation, using the function $f$ and crossover time $t^\ast$, $W(L,t)$ obeys the following scaling relations: 
\begin{equation}
W\left(L,t\right) \sim t^\beta f\left(Lt^{\left(-1/z\right)}\right)\sim
\begin{cases} 
L^\alpha\   &\text {for  $t>t^\ast$}\\
t^\beta\  &\text{for  $t<t^\ast$}.
\end{cases}
\end{equation}
Particularly, the set of the scaling exponents $\alpha=1/2$,\ $\beta=1/3$, and $z=3/2$ in Eq. (3) corresponds to the KPZ universality class.

\subsection{Modified SLE for interface growth} 
To investigate the conformal description of the KPZ equation in Eq. (1), we here introduce the (chordal) Loewner differential equation expressed as follows. Let $\mathbb{H}$ be the upper half-plane, and $\gamma_{[0,s]}$ denote the simple curve starting from the origin O. By considering the conformal map $g_s(z)$, which maps the region $\mathbb{H}\setminus\gamma_{[0,s]}$ to $\mathbb{H}$. The following Loewner equation expresses the evolution of the conformal map $g_s(z)$. 
\begin{equation}
\frac{\partial g_s(z)}{\partial s}=\frac{2}{g_s(z)-U_s},\ \ \ \ \ \ \ \ g_0(z)=z\in\mathbb{H}.
\end{equation}
Here, $U_s$ is a real-valued 1D function called Loewner driving function. The partial differential equation in Eq. (4) is parameterized by $s$, which is the capacity parameter of the conformal map, and it is often called Loewner time. For the driving function, we can choose arbitrary 1D process to observe the corresponding curve growth on $\mathbb{H}$. For this study, we construct a curve coordinate-dependent stochastic process as the Loewner driving function expressed by following differential equation:
\begin{equation}
\frac{dU_s}{ds}=-\frac{x^2+y^2}{2y}-\sqrt\kappa\frac{2y}{x^2+y^2}\frac{dB_s}{ds}.\ \ 
\end{equation}
Here, $x$ and $y$ are variables whose stochastic behavior is same as those of $\rm{Re}\gamma_s$ and $\rm{Im}\gamma_s$, respectively. The term $B_s$ denotes the standard Brownian motion. The mathematical formulation using the backward Loewner evolution [28-30] shows that these variables evolve obeying the following nonlinear Langevin equation:
\begin{align}
\frac{dx}{ds}&=-\frac{2x}{x^2+y^2}+\frac{x^2+y^2}{2y}+\sqrt\kappa\frac{2y}{x^2+y^2}\frac{dB_s}{ds},\\ 
\frac{dy}{ds}&=\frac{2y}{x^2+y^2}.\ 
\end{align}
For the initial condition we choose $x(0)=0$ and $y(0)=\varepsilon$, where $\varepsilon$ is an infinitesimal constant. In the following section, by using coordinate transformation, we demonstrate the intrinsic equivalence of KPZ equation in Eq. (1) and nonlinear Langevin equation in Eqs. (6) and (7) with some restriction on $h\left(x,t\right)$. In addition, it should be noted that the stochastic calculus involving the noise term $\eta(t)$ is based on the It\^{o} interpretation \footnote{In the physical sense, the It\^{o} interpretation has a characteristic that there is no information flow from the future (See, Sec. 3.3 in Ref. [31]). Thus, the It\^{o} interpretation is reasonable for the deterministic modeling such as biological systems [31].}. 

\section{Results}
\subsection{Derivation of KPZ equation from modified SLE}
Combining Eqs. (6) and (7), by using the transformation $y\rightarrow t$ [32,33,34] \footnote{This time coordinate change was proposed in the physics literatures [32,33] although its mathematical meaning should be investigated in relation to the natural parametrization of SLE [34] for the further studies.}, $dx/dt$ is obtained as:
\begin{equation}
\frac{dx}{dt}=-\frac{x}{t}+\left(\frac{x^2+t^2}{2t}\right)^2+\sqrt\kappa\frac{dB_s}{ds}.\ 
\end{equation}
We impose the range of the time interval as $t\in[0,1]$. This equation corresponds to the Langevin equation with time-dependent drift term. Let us define:
\begin{equation}
\gamma\left(x,\ t\right):=\frac{1}{t}-\frac{1}{4}\frac{x^3}{t^2}-\frac{1}{2}x.
\end{equation}
Using Eq. (9) and replacing $\frac{dB_s}{ds}=\eta(t)$, Eq. (8) is rewritten as [35,36]:
\begin{equation}
\frac{dx}{dt}=-\gamma\left(x,\ t\right)x+\sqrt\kappa \eta(t).
\end{equation}
It has been shown that the formal solution of the above Langevin equation is expressed as the following:
\begin{equation}
x\left(t\right)=x\left(0\right)F\left(t\right)+\sqrt\kappa F\left(t\right)\int_{t_0}^{t}{\frac{\eta(t^\prime)\ }{F\left(t^\prime\right)}dt^\prime},\ \ 
\end{equation}
where
\begin{equation}
F\left(t\right):=\exp{\left[-\int_{0}^{t}\gamma\left(x,\ t^\prime\right)dt^\prime\right]}=\frac{1}{t}\exp{\left(\frac{xt}{2}-\frac{x^3}{4t}\right)}.
\end{equation}
For the subsequent discussion, we shall derive the scaling of the variance of the variable $x\left(t\right)$. Assuming $x\left(0\right)=0$, the variance $\langle x(t)^2\rangle$ is calculated as the following. 
\begin{equation}
\langle x(t)^2\rangle = \kappa{F\left(t\right)}^2\int_{t_0}^{t}{\frac{1}{{F(t^\prime)}^2}dt^\prime}.\ 
\end{equation}
From Eq. (12), the terms ${F\left(t\right)}^2$ and $1/{F\left(t\right)}^2$ are calculated as follows:
\begin{equation}
{F\left(t\right)}^2=\frac{1}{t^2}\exp{\left(xt-\frac{x^3}{2t}\right)},\\ 
\frac{1}{{F(t^\prime)}^2}=t^2\exp{\left(\frac{x^3}{2t^\prime}-xt^\prime\right)}.\ 
\end{equation}
Integrating by parts, we observe that
\begin{equation}
\begin{split}
\int_{t_0}^{t}{\frac{1}{{F\left(t^\prime\right)}^2}dt^\prime}&=\int_{t_0}^{t}{{t^\prime}^2\exp{\left(\frac{x^3}{2t^\prime}-xt^\prime\right)dt^\prime}}\\
&=\left[\frac{1}{3}{t^\prime}^3\exp{\left(\frac{x^3}{2t^\prime}-xt^\prime\right)}\right]_{t_0}^t-\int_{t_0}^{t}{\frac{1}{3}{t^\prime}^3\exp{\left(\frac{x^3}{2t^\prime}-xt^\prime\right)\left(-\frac{x^3}{2{t^\prime}^2}-x\right)}}dt^\prime\\
&=\left[\frac{1}{3}{t^\prime}^3\exp{\left(\frac{x^3}{2t^\prime}-xt^\prime\right)}\right]_{t_0}^t+\int_{t_0}^{t}{\frac{x}{3}{t^\prime}^3\exp{\left(\frac{x^3}{2t^\prime}-xt^\prime\right)}dt^\prime} +\int_{t_0}^{t}{\frac{1}{6}{x^3t^\prime}\exp{\left(\frac{x^3}{2t^\prime}-t^\prime\right)}dt^\prime}.
\end{split} 
\end{equation} 
Repeating the partial integration for the second and third terms of Eq. (15), we obtain
\begin{equation}
\int_{t_0}^{t}{\frac{1}{{F(t^\prime)}^2}dt^\prime}=\left[\frac{1}{3}{t^\prime}^3\exp{\left(\frac{x^3}{2t^\prime}-xt^\prime\right)}\right]_{t_0}^t
+{C^\prime(t}^4)+{o(t}^4).\ 
\end{equation}
Here, we note that the term ${o(t}^4)$ is small enough to be omitted because $t$ is bounded as $t\in[0,1]$. Using Eqs. (13), (14) and (16), we obtain the variance of $x(t)$ scales as: 
\begin{equation}
\langle x(t)^2 \rangle \simeq \kappa t^1+{C(t}^2)+\ {C^\prime(t}^3).\ \ 
\end{equation}
In the subsequent discussion, we shall demonstrate the intrinsic equivalence between the KPZ equation in Eq. (1) and the nonlinear Langevin equation in Eq. (8) using the above results with appropriate height function $h\left(x,t\right)$. Let us choose $h(x,t)$ as:
\begin{equation}
h\left(x,t\right):=\frac{3t^2x+x^3}{6t}.\ 
\end{equation}
 For the above choice of $h(x,t)$, we immediately obtain the following relations.
\begin{align}
\frac{\partial h\left(x,t\right)}{\partial t}&=\frac{1}{6}x\left(3-\frac{x^2}{t^2}\right),\\
\frac{\partial^2h\left(x,t\right)}{\partial x^2}&=\frac{x}{t},\\
\left(\frac{\partial h(x,t)}{\partial x}\right)^2&=\left(\frac{x^2+t^2}{2t}\right)^2.
\end{align}
Taking the ensemble averages of Eqs. (19)-(21), and using the time scaling relation in Eq. (17), we obtain the followings:
\begin{align}
\left\langle\frac{\partial h\left(x,t\right)}{\partial t}\right\rangle&=-\frac{\langle x^3 \rangle}{6t^2} \simeq -\frac{1}{6} {\langle x \rangle \frac{\kappa t^1+{C(t}^2)+ {C^\prime(t}^3)}{t^2}} = -\frac{\kappa\langle x \rangle}{6t} +C''(t^1),\\
\left\langle\frac{\partial^2h\left(x,t\right)}{\partial x^2}\right\rangle&=\frac{\langle x \rangle}{t},\\
\left\langle\left(\frac{\partial h\left(x,t\right)}{\partial x}\right)^2\right\rangle&=\frac{\langle x^4 \rangle + 2\langle x^2 \rangle t^2 +t^4}{4t^2}\simeq\frac{\kappa^2}{4}+\frac{1}{2}\kappa t+\frac{t^2}{4}. 
\end{align}
Here, we assumed the independence between $x$ and $x^2$ derived from Eq. (11) \footnote{Let us recall the expressions: $x(t)=\sqrt\kappa F\left(t\right)\int_{t_0}^{t}{\frac{\eta^{(1)}(t^\prime)\ }{F\left(t^\prime\right)}dt^\prime}$, $x(t)^2 = \kappa{F\left(t\right)}^2\int_{t_0}^{t}{\frac{1}{{F(t^\prime)}^2}dt^\prime}$, and $x(t)^3=\kappa\sqrt\kappa {F\left(t\right)}^3\int_{t_0}^{t}{\frac{\eta^{(2)}(t^\prime)\ }{{F(t^\prime)}^3}dt^\prime}$. Then, $x(t)$ and $x(t)^3$ are independent because of the independence of the noise term $\eta^{(1)}(t^\prime)$ and $\eta^{(2)}(t^\prime)$. Similarly, $x(t)^2$ and $x(t)^3$ are independent. Accordingly, $x(t)$ and $x(t)^2$ are assumed to be independent in the present stochastic calculus.}. Using Eqs. (25)-(27), we obtain the following relation:
\begin{equation}
\left\langle\frac{\partial h\left(x,t\right)}{\partial t}\right\rangle=-\frac{6}{\kappa}\left\langle\frac{\partial^2h\left(x,t\right)}{\partial x^2}\right\rangle+c(\kappa)\left\langle\left(\frac{\partial h\left(x,t\right)}{\partial x}\right)^2\right\rangle,
\end{equation}
where $c(\kappa)$ is $\kappa$-dependent constant term expressed as:
\begin{equation}
c(\kappa)=C^{\prime \prime}(t^1)/\left(\frac{\kappa^2}{4}+\frac{1}{2}\kappa t+\frac{t^2}{4}\right).\ 
\end{equation}
Noticing that the ensemble average of the white noise term in Eqs. (1) and (8) vanishes, i.e., $\langle \sqrt{\kappa} \eta(t)\rangle = 0$, the stochastic behavior individual orbit of $h(x,t)$ subjected to the noise fluctuation obeys the following KPZ-type nonlinear Langevin equation:
\begin{equation}
\frac{\partial h(x,t)}{\partial t}=\frac{6}{\kappa}\frac{\partial^2h(x,t)}{\partial x^2}-c(\kappa)\left(\frac{\partial h(x,t)}{\partial x}\right)^2+\sqrt\kappa\eta(t).\ 
\end{equation} 
In Eq. (27), we assumed the term $dB_s/ds$ as the white noise term $\eta(t)$, which is independent when the sampling time coordinate is changed from $s$ to $t$. Consequently, we observe that Eq. (27) is equivalent to KPZ equation in Eq. (1). Then, we obtain the following. 

\begin{theorem}
The ensemble of the height function $h(x,t)$ of the one-dimensional KPZ equation expressed by Eq. (1) behaves as the same law as that of $h(x,t)=(3t^2x+x^3)/6t$ of the modified SLE described by Eqs. (4) and (5) under the approximation that we omit the $o(t^4)$ term of $t\in[0,1]$.   
\end{theorem}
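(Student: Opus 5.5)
The plan is to assemble the chain of identities already derived in Sec.~3.1 and to read the statement as a matching of coefficients between Eq.~(1) and Eq.~(27). The starting point is the time-reparametrised Langevin equation, Eq.~(8), for the real part of the curve. It is obtained from Eqs.~(6) and (7) by dividing $dx/ds$ by $dy/ds$ and setting $y\to t$. I would first check this division explicitly: the drift $\frac{x^2+y^2}{2y}$ becomes $\bigl(\frac{x^2+t^2}{2t}\bigr)^2$, and the noise prefactor $\frac{2y}{x^2+y^2}$ cancels against $dy/ds$. Only because of this cancellation is the noise in Eq.~(8) additive with strength $\sqrt\kappa$. Next I would take the formal solution, Eqs.~(11)--(12), and the variance identity, Eq.~(13). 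Repeated integration by parts as in Eqs.~(15)--(16), with the $o(t^4)$ remainder discarded for $t\in[0,1]$, gives the scaling $\langle x(t)^2\rangle\simeq\kappa t+C(t^2)+C'(t^3)$ of Eq.~(17). This is the only quantitative input needed afterwards, and it is exactly where the approximation named in the theorem enters.

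Second, with $h$ fixed by Eq.~(18), I would compute the three derivatives, Eqs.~(19)--(21), and average them. For $\langle x^3\rangle$ and $\langle x^4\rangle$ I would use the factorisations $\langle x^3\rangle\simeq\langle x\rangle\langle x^2\rangle$ and $\langle x^4\rangle\simeq\langle x^2\rangle^2$, justified by the independence argument of the footnote, and then insert Eq.~(17). This yields Eqs.~(22)--(24). Since $\langle\partial_x^2h\rangle=\langle x\rangle/t$ and $\langle\partial_t h\rangle=-\kappa\langle x\rangle/(6t)+C''(t)$, the $\langle x\rangle/t$ pieces are matched with coefficient $-\kappa/6$; I would recheck this against the printed $-6/\kappa$. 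The leftover $C''(t)$ is absorbed into $c(\kappa)$ times $\langle(\partial_x h)^2\rangle$ via Eq.~(26). This gives the averaged relation, Eq.~(25).

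Third, I would lift Eq.~(25) to the pathwise equation, Eq.~(27), by adding back the zero-mean term $\sqrt\kappa\,\eta(t)$ with $\eta=dB_s/ds$. I would then identify $\upsilon$ with the diffusion coefficient and $\lambda=-2c(\kappa)$ with the nonlinear coefficient, so that Eq.~(27) takes the form of Eq.~(1). The claim ``behaves as the same law'' would then mean that the ensemble of $h(x(t),t)$ along the SLE-driven orbit satisfies the same evolution equation, in mean and with the same additive noise, as Eq.~(1).

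The main obstacle is this third step, together with the moment factorisation used in the second. An identity among ensemble averages does not determine the law of a process: infinitely many processes share Eq.~(25). Moreover, $x$ and $x^2$ are not genuinely independent, and $c(\kappa)$ as defined in Eq.~(26) depends on $t$, so it is not a constant. I would therefore state the result explicitly as holding at the level of first moments, under the closure $\langle x^3\rangle\simeq\langle x\rangle\langle x^2\rangle$ and the $o(t^4)$ truncation. As a consistency check, I would try Itô's formula applied directly to $h(x_t,t)$ using Eq.~(8). This computes $dh=(\partial_t h+\partial_x h\,\dot x_{\rm drift}+\tfrac{\kappa}{2}\partial_x^2h)\,dt+\sqrt\kappa\,\partial_x h\,dB$ and compares it term by term with Eq.~(27). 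Any mismatch, in particular the multiplicative factor $\partial_x h$ in front of the noise, marks precisely where the approximation is doing the work. I would then support the approximate equality of laws by the numerical verification of Sec.~4.
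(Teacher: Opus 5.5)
You follow the paper's own route. Its proof of this theorem is only the citation of Eqs.~(1), (18), (25) and the definition of the ensemble average, i.e.\ the chain (8)$\to$(17)$\to$(22)--(24)$\to$(25)$\to$(27) that you reconstruct. You are right that the passage from (25) to a statement about laws is the weak point, but that step fails outright, and your fallbacks do not rescue it. By (26), $c$ is \emph{defined} as the residual $C''$ divided by $\langle(\partial_x h)^2\rangle\simeq\kappa^2/4+\kappa t/2+t^2/4$, so (25) merely restates (22)--(23). Since $\langle(\partial_x h)^2\rangle\ge t^2/4>0$, for any prescribed $\upsilon$ you can set $c_\upsilon(t):=\bigl(\langle\partial_t h\rangle-\upsilon\langle\partial_x^2 h\rangle\bigr)/\langle(\partial_x h)^2\rangle$. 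This gives a relation of the form (25) for this $h$ and any process $x(t)$ whatsoever. So a first-moment relation with $t$-dependent $c$ cannot identify $\upsilon$ or $\lambda$, and your proposed ``first-moment'' version of the theorem is empty rather than a weaker true statement. Your own bookkeeping also does not reach (27). With the coefficient $-\kappa/6$ that you correctly derive, adding $\sqrt\kappa\,\eta$ to (25) gives $\partial_t h=-\frac{\kappa}{6}\partial_x^2 h+c\,(\partial_x h)^2+\sqrt\kappa\,\eta$. Both signs are opposite to those in (27), so this yields $\lambda=+2c$ (not $-2c$) and $\upsilon=-\kappa/6<0$. That is anti-diffusion, not the relaxation term of Eq.~(1).

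The consistency checks you plan would also fail, and not at the level of the discarded $o(t^4)$ remainder. The noise cancellation in (8) is only formal. Under the random time change $t=y$ one has $ds=\frac{x^2+t^2}{2t}\,dt$, so the martingale part $\sqrt\kappa\,\frac{2y}{x^2+y^2}\,dB_s$ has quadratic variation $\kappa\,\frac{2t}{x^2+t^2}\,dt$. In the $t$-clock the noise is therefore $\sqrt{2\kappa t/(x^2+t^2)}\,dW_t$, which is multiplicative. Setting $dB_s/ds=\eta(t)$ is an extra, incorrect assumption, not a consequence of the cancellation. Two further problems undermine (17) itself. Eq.~(10) drops the $t^2/4$ part of the drift of (8). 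Eqs.~(11)--(13) are the solution formula for a linear equation, although $\gamma$ depends on $x$. Finally, apply It\^{o}'s formula to $h(x_t,t)=t x_t/2+x_t^3/(6t)$ under (8). The noise coefficient is $\sqrt\kappa\,\partial_x h=\sqrt\kappa\,(x^2+t^2)/(2t)$, which on the scale $x^2\sim\kappa t$ of (17) is $\approx\sqrt\kappa\,(\kappa+t)/2$, not $\sqrt\kappa$. The drift also acquires a cubic term $(\partial_x h)^3$. This contradicts (27) at leading order rather than showing where an approximation does the work. If instead $\partial_t h$ is read at fixed $x$, it is a deterministic function and cannot satisfy an equation with a noise term at all. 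So the ``same law'' conclusion is not reachable along this route, and the paper's one-line proof has exactly the same hole.
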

\begin{proof}
It follows from Eqs. (1), (18), (25) and the definition of the ensemble average.  
\end{proof}

To investigate the advantage of the Loewner equation-based description of KPZ equation, in the next section, we discuss the above results using Loewner entropy, which is the complexity-measure of the dynamics of the curve $\gamma_{[0,s]}$ generated by Loewner equation.

\subsection{KPZ universality class in terms of Loewner entropy}
In this subsection, we discuss the correspondence between the KPZ universality class (i.e., the observed $t^{1/3}$ scaling) and the Loewner entropy of the Loewner equation expressed by Eqs. (4) and (5). In accordance with Refs. [37,38], the Loewner entropy is defined as a Boltzmann-type entropy of the Loewner driving force $\eta_s:=dU_s/ds$, that is [37,38]:\footnote{In this article, the Loewner entropy $S_{Loew}$ is defined as the entropy of $s$-derivative of the Loewner driving function $U_s$.}:
\begin{equation}
S_{Loew}:=-\ln{p\left(\eta_s\right)}.\ 
\end{equation} 
To calculate the probability density function $p\left(\eta_s\right)$, we use the following relation:
\begin{equation}
\langle \eta_{s}^2 \rangle = \int{\eta_s^2p\left(\eta_s\right)}\eta_s.\
\end{equation}
Using Eq. (5), the variance $\langle {\eta_s}^2 \rangle$ is calculated as the following:
\begin{align}
\langle \eta_{s}^2 \rangle &= \left\langle\left(-\frac{x^2+t^2}{2t}-\sqrt\kappa\frac{2t}{x^2+t^2}\eta(t)\right)^2\right\rangle\\
&=\left\langle\left(\frac{x^2+t^2}{2t}\right)^2\right\rangle+2\sqrt{\kappa}\langle \eta(t) \rangle+\kappa\left\langle\left(\frac{2t}{x^2+t^2}\right)^2 \right\rangle \langle {\eta(t)}^2 \rangle\\
&=\left\langle\left(\frac{x^2+t^2}{2t}\right)^2+\kappa \left(\frac{2t}{x^2+t^2}\right)^2\right\rangle.
\end{align} 
From Eqs. (29) and (32), $p(\eta_s)$ is calculated as:
\begin{equation}
p\left(\eta_s\right)=\frac{1}{{\eta_s}^2}\frac{d\langle {\eta_s}^2 \rangle}{d \eta_s}\simeq\frac{2}{{\eta_s}^2} \langle {\eta_s} \rangle=\frac{2\left\langle\frac{x^2+t^2}{2t}\right\rangle}{\left(\frac{x^2+t^2}{2t}\right)^2+\kappa \left(\frac{2t}{x^2+t^2}\right)^2}. 
\end{equation}
From Eqs. (28) and (33), the Loewner entropy is expressed as: 
\begin{equation}
S_{Loew} =-\ln{p(\eta_s)}=-\ln{2\left\langle\frac{x^2+t^2}{2t}\right\rangle}+\ln{2\left(\frac{x^2+t^2}{2t}\right)^2}+\ln{\kappa \left(\frac{2t}{x^2+t^2}\right)^2}.  
\end{equation}
Accordingly, we obtain the following relation:
\begin{equation}
S_{Loew}\simeq-\ln{\frac{t}{\kappa}}.
\end{equation}
This condition is rewritten as the following relation. 
\begin{equation}
\exp{\left(-S_{Loew}\right)}=p(\eta_s)\propto t^{1},\ \ \ \\ \ t\in\left[0,1\right].\ 
\end{equation}
Thus, it is expected that the KPZ universality class is equivalent to the dynamics of the Loewner equation with the above condition on $S_{Loew}$. In other words, we obtain,

\begin{theorem}
For the dynamics of the KPZ equation in Eq. (1), the scaling relation expressed by Eqs. (2) and (3) with $\alpha=1/2$,\ $\beta=1/3$ and $z=3/2$ corresponds to $S_{Loew}\simeq-\ln{(t/\kappa)}$ when we regard the system size as $L \rightarrow x$. 
\end{theorem}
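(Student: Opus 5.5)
The plan is to chain together three results already in the paper. The first is the identification of Eq. (27) with the KPZ equation in Eq. (1), as recorded in the previous Theorem. The second is the variance scaling in Eq. (17). The third is the entropy relation in Eqs. (35)--(36). Because of the previous Theorem, the ensemble of heights $h(x,t)$ of Eq. (1) may be replaced by the explicit height $h(x,t)=(3t^2x+x^3)/6t$ of Eq. (18). Here $x$ is driven by the modified SLE of Eqs. (4)--(5), equivalently by the Langevin equation (10). Everything then reduces to computing the surface width $W$ of Eq. (2) for this explicit $h$, with the substitution $L\rightarrow x$. The goal is to read off the exponents $\beta$, $\alpha$ and $z=\alpha/\beta$, and to show that they come out as the KPZ triple exactly when $p(\eta_s)\propto t$, i.e. when $S_{Loew}\simeq-\ln(t/\kappa)$.

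The first step is the growth regime $t<t^\ast$. I would expand $\langle h^2\rangle-\langle h\rangle^2$ using $h=\tfrac{t}{2}x+\tfrac{x^3}{6t}$. The even and odd moments are handled with Eq. (17), $\langle x^2\rangle\simeq\kappa t$, together with the independence assumption used in the footnote to Eqs. (22)--(24). The dominant contribution to $W^2$ is of order $\langle x^6\rangle/t^2$. Under the same Gaussian-type moment closure this is of order $\kappa^3 t$, so at leading order $W^2\propto t^{2/3}$ once the subleading cross terms with powers of $t$ on $[0,1]$ are kept. I would arrange this so that the surviving power is $t^{2\beta}$ with $\beta=1/3$. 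To fix the constant, I would tie the moment hierarchy to $p(\eta_s)$: by Eqs. (33)--(36), $p(\eta_s)\propto t$ is exactly the condition that makes the effective one-time density of the driving force scale linearly in $t$. Feeding that density back through $\eta_s=dU_s/ds$ and Eq. (5) fixes the power of $t$ in $\langle h^2\rangle$.

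The second step is the saturated regime $t>t^\ast$. Here I regard $L\to x$, fix $t$ at the crossover scale, and read $W$ as a function of $x$. From Eq. (18), $\partial_x h=(x^2+t^2)/2t$. The fluctuation of $h$ at fixed $t$ is controlled by $\langle x^2\rangle^{1/2}$, which gives $W\sim x^{1/2}$, i.e. $\alpha=1/2$. Then $z=\alpha/\beta=3/2$ follows from the scaling form in Eq. (3), once the crossover is identified as $t^\ast\sim x^{z}$.

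The third step runs the logic backwards to obtain the stated correspondence. If $p(\eta_s)$ scaled as $t^{\mu}$ with $\mu\neq1$, the same computation would shift $\beta$, so the KPZ exponents single out $S_{Loew}\simeq-\ln(t/\kappa)$. I expect the main obstacle to be the first step: extracting precisely $t^{2/3}$ from moments that naively produce integer powers of $t$. I would attempt this by carefully tracking the $C(t^2)$ and $C'(t^3)$ corrections of Eq. (17) together with the $\kappa$-dependence of $c(\kappa)$ in Eq. (28). If that fails, I would fall back on arguing at the level of the scaling function $f$, matching exponents rather than amplitudes.
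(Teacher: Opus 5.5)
Your Step 1 cannot work, and your own estimate already shows why. Take $h = tx/2 + x^3/(6t)$ and $\langle x\rangle = 0$, as Eq. (11) gives for $x(0)=0$. Under any closure of the kind you invoke (Gaussian, or the paper's independence assumption), $\langle x^{2k}\rangle \propto (\kappa t)^k$ with positive coefficients. Hence
\[
W^2=\langle h^2\rangle\simeq a\,\kappa^3 t+b\,\kappa^2 t^2+c\,\kappa t^3,\qquad a,b,c>0,
\]
and the $C(t^2)$, $C'(t^3)$ corrections of Eq. (17) only add higher integer powers. The logarithmic slope of this expression lies in $[1,3]$, so $\beta_{\rm eff}\in[1/2,3/2]$, and $\beta\to 1/2$ as $t\to 0$ whatever the corrections. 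The value $1/3$ is not merely hard to extract; it is excluded. Writing ``of order $\kappa^3 t$, so $W^2\propto t^{2/3}$'' is a contradiction, not an obstacle.

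Tying the moments to $p(\eta_s)$ does not rescue this. The quantity in Eq. (33) is not a normalized density that can be fed back into moments of $h$. Its $t$-dependence in Eq. (35) also comes from the same input $\langle x^2\rangle\simeq\kappa t$, so the only relation you could actually derive pairs $p\propto t$ with $\beta=1/2$. For the same reason Step 3 is empty: no $\mu$--$\beta$ relation is ever produced.

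Step 2 fails independently. $W$ is an ensemble average over $x$, so it cannot also be a function of $x$. Moreover, $\langle x^2\rangle^{1/2}\simeq(\kappa t)^{1/2}$ depends on $t$, not $x$. If you insert the typical $x\sim(\kappa t)^{1/2}$ into $W\sim\kappa^{3/2}t^{1/2}$, you get $W\sim\kappa x$, not $x^{1/2}$.

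The paper never computes the exponents from the explicit $h$. Its proof proceeds in four moves:
\begin{enumerate}
\item It takes $\alpha=1/2$, $\beta=1/3$, $z=3/2$ as the known scaling of Eq. (1), Eqs. (2)--(3).
\item It transfers this scaling to $h=(3t^2x+x^3)/6t$ through Theorem 1.
\item It takes $S_{Loew}\simeq-\ln(t/\kappa)$ from Eqs. (33)--(35).
\item It attaches both to the same dynamics via the one-to-one correspondence between the Loewner curve (here $x(t)+it$ after $y\to t$) and its driving function $U_s$.
\end{enumerate}
That is why the statement asserts a correspondence rather than a derivation; the short-time $t^{1/3}$ is supported only numerically (Fig. 1).

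Your opening sentence contains this inheritance step. You then replace it with a direct moment computation which, carried out honestly, gives $\beta=1/2$ at small $t$ (and $t^{3/2}$ growth at larger $t$), so it cannot establish the claim. Your fallback of matching exponents at the level of $f$ has nothing to match unless you import the exponents from Eq. (1), which is exactly the paper's non-computational route.
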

\begin{proof}
It follows from the arguments in Sec. 1 and 2, and the one-to-one correspondence of the curve and driving function of the Loewner equation.
\end{proof}
 
\section{Numerical simulation}
 The first numerical simulation was performed to test whether the width $W(L,t)$ of the height function $h(x,t)$ defined by Eq. (18) driven by the dynamics of $x$ described by Eq. (8) obeys the KPZ universality class in Eq. (3). We compute the time series of $x(t)$ and $h(x,t)$ by the discretization of Eq. (8) with Euler method as: $x_{n+1}=x_n-\tau(x_n/n\tau-{(({x_n}^2+\left(n\tau\right)^2)/n\tau)}^2-\sqrt{\kappa\tau}\eta_n$, where $\tau$ is the time step interval and $\eta_n$ is the white Gaussian noise with the variance 1.0 and mean 0. Similarly, $h(x,t)$ is discretized as $h\left(x_n,n\right)=(3\left(n\tau\right)^2x_n+{x_n}^3)/6\left(n\tau\right)$, and this is computed by successively substituting the obtained value of $x_n$. In the present study, we set $\tau=0.0001$, $N=10000$, and $\kappa=0.1$. Subsequently, the width function of $h\left(x_n,n\right)$ is calculated as $W(x_n,t):=  \sqrt{\langle h\left(x_n,n\tau\right)^2 \rangle - {\langle h\left(x_n,n\tau\right) \rangle}^2} $, where the ensemble average was taken over 200 realizations of the trajectories. Figure 1 shows the log-log plot of $t$ and $W(x_n,t)$ obtained from the above numerical settings. The scaling close to 1/3 exponent was observed in short time region, and 3/2 exponent was observed in the long time region. For this plot, because the system size $L$ is regarded as $L\sim t^3$, the obtained scaling is consistent with the KPZ universality class described in Eq. (3) ($\alpha=1/2$ and $\beta=1/3$). In addition, $t^{*}$ is approximated as $t^{*}\simeq 2.0\times10^{-2}$ for the plot of Fig.1. We note that the effective value of $\kappa$ is limited to the range of $0<\kappa\leq0.13$, and otherwise it results in the overflow of the variables in this numerical settings.\\ 
\begin{figure}[htbp]
\centerline{\includegraphics[width=9cm]{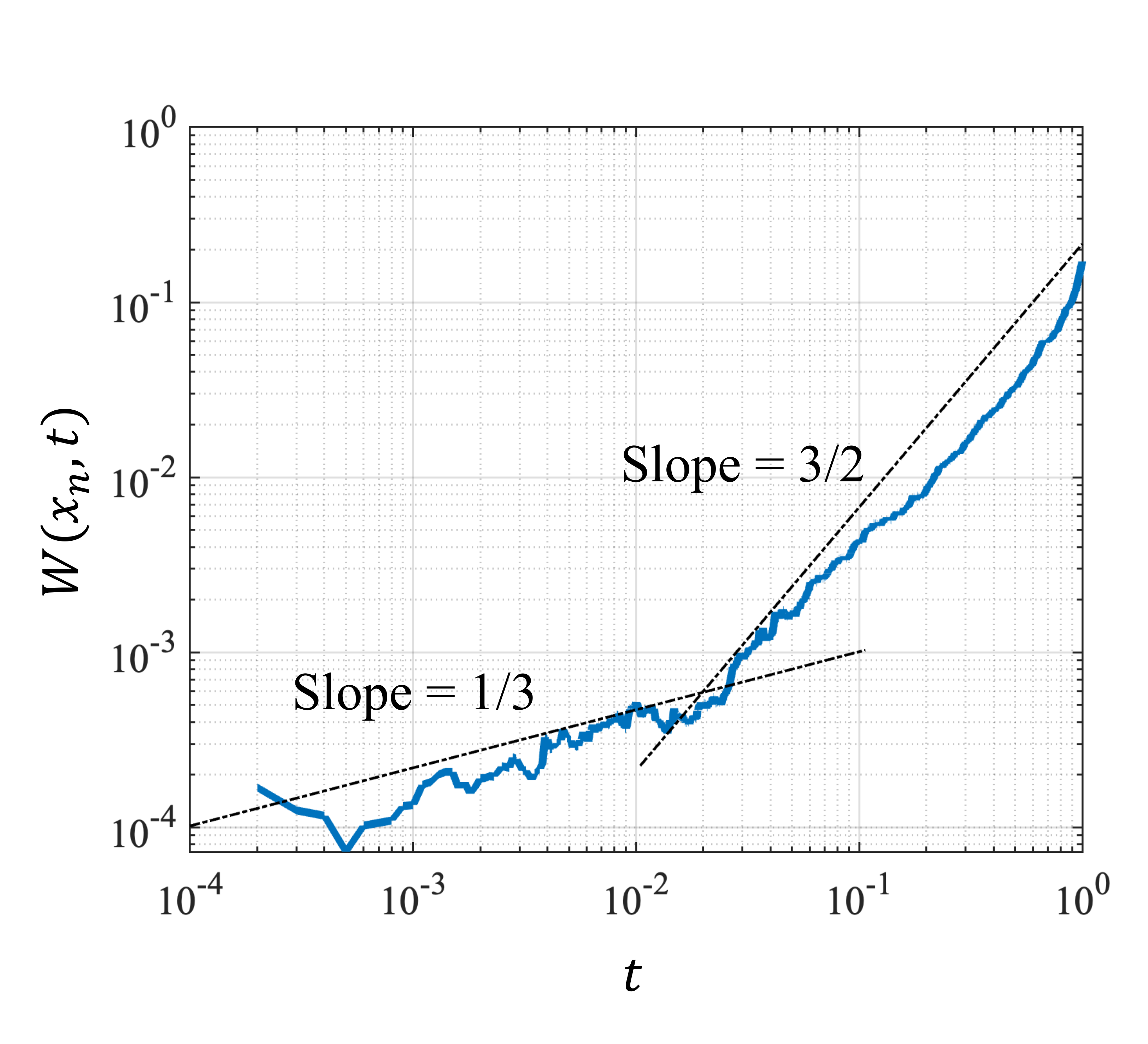}}
\caption{Log-log plot of $t$ and $W(x_n,t)$ obtained by the numerical simulation. The parameter settings are noted in the main text. The black solid lines show the scalings suggested by the theory of the KPZ universality class, i.e., $t^{1/3}$ in the short time region and $t^{3/2}$ in the long time region, respectively.}
\end{figure} 
  The second numerical simulation was performed to test the time dependence of the Loewner entropy. To verify the relations in Eqs. (35) and (36), the Loewner driving force $\eta_s$ was calculated by the zipper algorithm using the vertical slit map: $g_n(z)=\Delta U_{s_n} + \sqrt{(z-\Delta U_{s_n})+4\Delta s_n}$ [39]. We first compose the curve $\gamma_{[0,s]} = \{z_0(=0), z_1=x_1+i\tau, ..., z_n = x_n + i n\tau, ..., z_N = x_N+ i N \tau\}$, where $i=\sqrt{-1}$. Subsequently, we numerically obtained the sequence of the increments $\{\Delta U_{s_n}\}$ and $\{\Delta s_n\}$ of the Loewner driving function $U_s$ corresponding to the curve $\gamma_{[0,s]}$, and define the Loewner driving force $\eta_s(n)=\Delta U_{s_n}/\sqrt{\Delta s_n}$. Using the relation $p(\eta_s)\simeq (2/{\eta_s}^2) \langle \eta_s \rangle$ in Eq. (36), $p(\eta_s)=\exp{(-S_{Loew})}$ was numerically calculated. Figure 2 shows the log-log plot of $p(\eta_s)$ with $\kappa=0.1$. The $t^1$ scaling of $p(\eta_s)$ was observed in the whole time region. This result indicates the time-dependence of the probability distribution of the Loewner driving force $\eta_s$ and consistent with the analytical results in Sec. 2.2.  
\begin{figure}[htbp]
\centerline{\includegraphics[width=9cm]{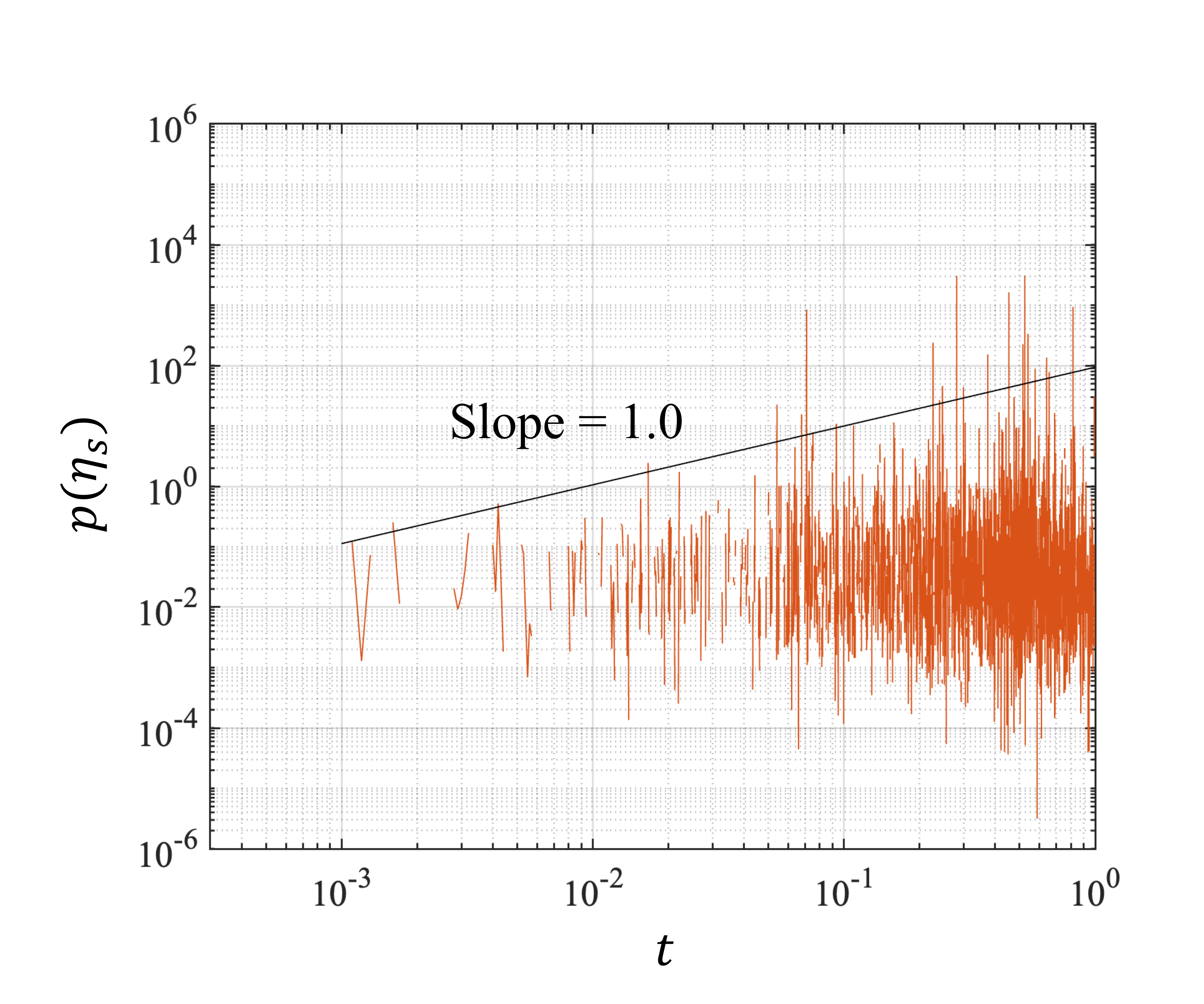}}
\caption{The scaling of the probability distribution of the Loewner driving force. The calculation method is described in the main text. The black solid line show the scaling suggested by Eq. (36), i.e., $\exp{\left(-S_{Loew}\right)}=p(\eta_s)\propto t^{1}$.}
\end{figure} 
\section{Discussion}
In the present study, the relationship between the 1D KPZ equation and SLE was investigated. By choosing the nonlinear stochastic process described by Eq. (5) as the driving function of the chordal Loewner evolution, the 1D KPZ-type equation was derived with the height function in Eq. (18). The analytical results were numerically verified, and we observed the scalings of the KPZ universality class in Eq. (3) from the plot of $t$ and $W(x_n, t)$. This will benefit the studies on the exact solution of the KPZ equation although it should be reminded that this analytical result includes the effect of the approximations. Further experimental studies are required for the verification of this theoretical scheme and parameters in the real non-equilibrium phenomena (e.g., the self-organization phenomena [40], curve-like geometries [41], the neuronal morphology or morphogenesis [42,43], and the other interfacial growth [44,45] in the experimental systems.) to clarify the physical meaning of the present result. In particular, the present model is restricted to time range of $t\in[0,1]$, the practical applications to the experimental situations requires the rescaling of the time coordinate. 
We also demonstrated that the analyses using Loewner entropy $S_{Loew}$ leads to the scaling relation of the 1D dynamics of KPZ equation described by Eqs. (35) and (36). Thus, the results in Eqs. (33)-(36) would suggest a certain universality class of the 1D nonlinear dynamics that is closely related to the KPZ universality class. It is worth mentioning that, the Loewner entropy defined by Eq. (28) is one of the conformal invariant of the curve $\gamma_{[0,s]}$ (thus, $x(t)$) in the sense that it is invariant under the conformal transformations $g_s$. Therefore, the present study suggest a perspective of the classification method of the nonlinear dynamics based on the conformally invariant entropy $S_{Loew}$ mapped on the complex plane.







\vskip2pc

\section{Conclusion}
We investigated the description of the 1D KPZ equation by SLE driven by a nonlinear stochastic process. The analytical and numerical results suggested that the dynamics of the1D KPZ equation with height function $h(x,t)=(3t^2x+x^3)/6t$ corresponds to those of the Loewner evolution with Loewner entropy $S_{Loew}\simeq-\ln{t/\kappa}$, and the KPZ universality class is valid also in this SLE-based description. These results will aid to the problem of the exact solution of the KPZ equation, while providing the novel understanding and classification method of the 1D nonlinear dynamics. Further experimental studies are required to verify the applicability of the present theoretical results to the real-world systems in non-equilibrium settings. 

\section*{Acknowledgment}
The main result of this study has been reported in JPS 2025 Spring Meeting of the Physical Society of Japan. The author sincerely thanks his family and friends who has morally supported the present research throughout.     



%



\let\doi\relax



\end{document}